\definecolor{devpubscyan}{HTML}{0066CC}
\newcolumntype{L}[1]{>{\raggedright\let\newline\\\arraybackslash\hspace{0pt}}m{#1}}
\newcolumntype{C}[1]{>{\centering\let\newline\\\arraybackslash\hspace{0pt}}m{#1}}
\newcolumntype{R}[1]{>{\raggedleft\let\newline\\\arraybackslash\hspace{0pt}}m{#1}}
\DeclareUrlCommand{\url}{%
    
    }
\newtheoremstyle{dotless}{}{}{\itshape}{}{}{. }{ }{}
\theoremstyle{dotless}
\newtheorem{lem}{Lemma}[section]
\newtheorem{thm}{Theorem}[section]
\newtheorem{defn}[lem]{Definition}
\newcommand{\A}{\mathcal{A}}
\newcommand{\eps}{\epsilon}
\newcommand{\E}{\mathbb{E}}
\newcommand{\cD}{\mathcal{D}}
\newcommand{\cY}{\mathcal{Y}}
\newcommand{\cZ}{\mathcal{Z}}
\newcommand{\bbv}{\pmb{v}}
\newcommand{\AclientOHE}{\A_{\sf client-OHE}}
\newcommand{\AserverOHE}{\A_{\sf server-OHE}}
\newcommand{\AclientsymOHE}{\A_{\sf client-symOHE}}
\newcommand{\AserversymOHE}{\A_{\sf server-symOHE}}
\newcommand{\aggregator}{\texttt{Aggregator}}
\newcommand{\var}[1]{\mathbf{Var}\left[#1\right]}
\newcommand{\reference}{\mathcal{R}}
\author{\hspace{1in}Audra McMillan\thanks{Apple} \footnote{Corresponding author: \href{mailto:audra.mcmillan@apple.com}{audra.mcmillan@apple.com}}
\and Omid Javidbakht\footnotemark[1]
\and Kunal Talwar\footnotemark[1] \hspace{1in}
\and Elliot Briggs\footnotemark[1]
\and Mike Chatzidakis\footnotemark[1]
\and Junye Chen\footnotemark[1]
\and John Duchi\thanks{Stanford University and Apple}
\and Vitaly Feldman\footnotemark[1]
\and Yusuf Goren\footnotemark[1]
\and Michael Hesse\footnotemark[1]
\and Vojta Jina\footnotemark[1]
\and Anil Katti\footnotemark[1]
\and Albert Liu\footnotemark[1]
\and Cheney Lyford\footnotemark[1]
\and Joey Meyer\footnotemark[1]
\and Alex Palmer\footnotemark[1]
\and David Park\footnotemark[1]
\and Wonhee Park\footnotemark[1]
\and Gianni Parsa\footnotemark[1]
\and Paul Pelzl\footnotemark[1]
\and Rehan Rishi\footnotemark[1]
\and Congzheng Song\footnotemark[1]
\and Shan Wang\footnotemark[1]
\and Shundong Zhou\footnotemark[1] }
\title{Private Federated Statistics in an Interactive Setting}
\date{}
\begin{document}
\maketitle

\begin{abstract}
Privately learning statistics of events on devices can enable improved user experience. Differentially private algorithms for such problems can benefit significantly from interactivity. We argue that an aggregation protocol can enable an interactive private federated statistics system where user's devices maintain control of the privacy assurance. We describe the architecture of such a system, and analyze its security properties.
\end{abstract}

\section{Introduction}

Gaining insight into trends in the user population is an important aspect of improving the user experience. For example,
learning popular words that users type, helps improve keyboard language models. The data needed to derive these insights can be personal and sensitive, and must be kept private. We study the problem of learning aggregate information about data that lives on device, while providing strong privacy protection to individuals. Often these data are multi-dimensional, and one wants to learn, for example, correlations between various attributes. Traditional approaches to this problem such as RAPPOR~\cite{erlingsson2014rappor} and others (e.g. ~\cite{applelearningatscale}) convert the high-dimensional data to categorical data (e.g. by bucketing), and use algorithms such as binary randomized response  \cite{W65} (2RR) on a one-hot encoding of the data, or refinements of it to build a noisy histogram. These systems are often static; they are designed to, statically, send snapshots of the data at a pre-specified cadence.

In this work, we propose an iterative approach to such problems, that aligns better with how data analysts typically interact with data. Such an interactive approach can also allow for more accurate answers to queries of interest. We describe how such an approach can be implemented and consider the privacy risks against different kinds of attackers.
An aggregation protocol ensures that the server only sees the \emph{aggregate} of the reports sent by users. We use a locally differentially privacy (DP) algorithm to add noise to the reports, so that the privacy of the final aggregate can be analyzed using recent upper bounds for privacy amplification by shuffling \cite{CheuSUZZ19, ErlingssonFMRTT19}.

One of the main challenges with implementing a federated statistics platform that allows for interaction is ensuring that the device can enforce any privacy constraints; both individual privacy constraints and ensuring that the user's data is only used for appropriate queries. To address individual privacy constraints, we propose an on-device differential privacy budget accountant that allows the device to limit the total individual privacy risk of privatized data leaving the device. We also propose an on-device query verification framework that validates that the user's data is only used in queries that are part of an approved query class. More details on both of these key features are given in Section~\ref{systemsoverview}. We will also discuss additional concerns that may arise when implementing such a system, including secure data storage and auditability.

At a high level, interactive algorithms for histogram estimation build on top of a private histogram algorithm such as RAPPOR~\cite{erlingsson2014rappor} or PI- RAPPOR ~\cite{pmlr-v139-feldman21a}. However, instead of a fixed histogram of the data, the algorithm can iteratively specify how the data should be bucketed to define the histogram cells. For example we may first build a coarse grained histogram of the data. Based on the measurements thus made, the algorithm can refine the bucketing and build a new histogram.
Each such query gets answered by devices using a local DP algorithm, coupled with an aggregation framework such as PRIO~\cite{Prio}. Privacy amplification by shuffling bounds allow us to control the privacy loss of one such query.
Results bounding the privacy loss of composing multiple differentially private algorithms can allow us to ensure that the total privacy cost, of the sequence of adaptive queries, against the server can be controlled.
We propose a system design that makes it possible for the user’s device to verify various parameters of the queries and ensure a pre-specified privacy budget. Our system starts with a specification of the set of allowed queries and privacy budgets for a particular analysis. Our design ensures that the device can enforce these constraints. Indeed one of our design goals is that an inadvertent error by the analyst translates to loss of utility for the analyst, and cannot lead to additional privacy loss.

This interactive approach offers more accurate answers than a more traditional static approach in many settings. The benefit of interactive queries in such contexts is well-studied; indeed in the central model of differential privacy, there is a long line of work on adaptive approaches to building histograms~(e.g. \cite{Cormode2012DifferentiallyPS, Qardaji:2012, Song:2013, Bagdasaryan:2022}). As a simple example, suppose that we wish to learn new n-grams of words that are typed on devices. In a traditional static system, we would first decide on the length n of the n-grams we want to learn in advance, say n = 4. Each device would pick a random 4-gram that they typed, and send a local DP report based on it. We would thus build a noisy histogram of the frequency counts of the 4-grams chosen by the devices. In an interactive approach, we would first construct a histogram of 1-grams, i.e. of the popular words. We then use the knowledge of the frequent 1-grams to restrict our search for 2-grams; we only attempt to learn 2-grams that start with a frequent 1-gram. The frequent 2-grams are then extended to 3-grams and so on. An obvious advantage of this approach, and one exploited in previous work in the central model, is that we learn a hierarchical representation of the data and e.g. get the frequent 3-grams which may not be prefixes of any
frequent 4-gram. Note that estimating the frequency of a 3-gram by adding up frequencies of all 4-grams that start with it is not a feasible option in the DP setting as each of these frequencies has some noise added, leading to a much larger standard deviation for the noise for the sum than is required. The limit on the number of frequent 3-grams also limits the number of possible 4-grams we potentially report on. Thus the “dictionary’‘ over which we run a histogram query is smaller, allowing for higher accuracy and efficiency. In addition, in the event that user’s have multiple 4-grams on their device, this approach makes it feasible to select which 4-gram to report on, based on the learnt information. More generally, the device may report on a subset of 4-grams~\cite{Kim:2021, pmlr-v119-gopi20a} and restricting oneself to 4-grams whose prefix 3-grams are frequent in the dataset can lead to more accurate answers.
Such an approach works broadly for any hierarchically structured data universe, of which n-grams over words is but one example. Geographical data falls most naturally in this class and has been used for visualizations in several works, see, for example~\cite{Erlingsson2020}. Understanding the distribution of a single real-valued feature also falls in this category: we would like to bucket the data into intervals that are as fine as can be supported by noisy histogram measurements. When the distribution is unknown, we can learn the right bucketing by following such an adaptive algorithm, repeatedly refining buckets with large counts.
We will consider learning new $n$-grams (sequences of $n$ words) typed by users using the keyboard as a running example throughout this paper.

In addition to greater algorithmic flexibility, an interactive approach is more compatible with how a data analyst would typically study a dataset. They may start with looking at the one-dimensional marginals of the data at some granularity, and then want to look at 2-way marginals, or for a bucketization of the data that depends on the distribution.

\section{Related Work}

Differential privacy \citep{DMNS06} began in the cryptography and theoretical computer science communities over a decade ago and has become the privacy benchmark across a variety of fields \cite{DR14}.  In this work, we will consider the local model of differential privacy \cite{Kasiviswanathan:2008}, as well as the aggregate model of differential privacy, which is closely related to the recently introduced shuffle model \cite{ErlingssonFMRTT19, CheuSUZZ19}.

There has been considerable work on differential privacy in the local model, first introduced in \cite{Kasiviswanathan:2008}.
This work includes obtaining tight bounds for statistical estimators \citep{duchi2013local}, learning problems \citep{Kasiviswanathan:2008}, and studying frequency estimation \citep{BS15,erlingsson2014rappor,FPE16}. The main frequency estimation algorithm of interest in this paper, private one-hot encoding, was proposed and used in RAPPOR \cite{erlingsson2014rappor}. Several follow-up works have reduced the communication, time and space complexity of private frequency estimation, or the problem of privately identifying heavy hitters, while maintaining accuracy~\cite{BS15, MS06, hsu2012distributed, pmlr-v139-feldman21a}. While we will not discuss these techniques in this paper, they are certainly of interest in a system like the one we describe.

The idea of using an aggregation protocol as a subroutine goes back to the early days of differential privacy~\cite{ODO}. It has been proposed for smart meter data privacy in~\cite{Aces:2011}. As it forms the central primitive in our proposed system, we name this model Aggregate Differential Privacy. A closely related model is the shuffle model, which was first introduced in two concurrent works
\citep{ErlingssonFMRTT19,CheuSUZZ19}. These works demonstrated that shuffling data sent from users using locally differentially private randomisers can provably amplify DP guarantees. Since then a number of works have studied privacy amplification by shuffling and the model augmented with a shuffler more generally (e.g.~\citep{Balle:2019,Ghazi:2019,GhaziPV19,Balle:2020,Balle2020,GMPV20,cheu2020limits,CheuSUZZ19,BalcerCheu,WangXDZHHLJ20,Erlingsson2020,FeldmanMT:2020,girgis2020shuffled}). This amplification by shuffling can also be used to analyze models augmented with an aggregation protocol such as PRIO \citep{Prio}
(since the sum of real values provides even less information than shuffled values). In particular, privacy amplification by shuffling was used in Apple’s and Google’s Exposure Notification Privacy-preserving Analytics~\citep{ENPA:2021}.

The power of interaction in locally differentially private algorithms has been well documented in the literature \cite{Joseph:2019, 10.5555/3454287.3455630, Dagan:2020} by works showing that the ability to adaptively query users results in strictly more accurate algorithms. Recent work has also explored the power of interactivity in the setting of aggregate or shuffle differential privacy. Bagdasaryan et. al \citep{Bagdasaryan:2022} propose using an interactive algorithm similar to the one described in the introduction to learn location heatmaps under aggregate differential privacy. They demonstrate the improvement of this algorithm over standard non-interactive approaches.

There have been a few experimentaly systems and practical deployments of differentially private systems. Unlike our system, most of these employ central differential privacy, including the PSI ($\Psi$) system \cite{Psi}, OpenDP \cite{OpenDP}, Census Bureau system \cite{Census2, census2020}, Airavat \cite{RSKSW}, PINQ \cite{PINQ}, and GUPT \cite{MTSSC12}. In the local setting, RAPPOR deploys differential privacy on the Chrome web browser \cite{erlingsson2014rappor,FPE16}.

\section{Mathematical Preliminaries}
Consider a data universe $\cD$ which has $p$ elements, typically written as $[p] = \{ 1, \cdots, p\}$. Let $n$ be the total number of records, and for record $i \in [n]$, denote its data entry as $d^{(i)} \in \cD$. 

\subsection{Differential Privacy}

In this work, we describe a system with two levels of privacy protection; differential privacy in the aggregate model and differential privacy in the local model. The system is designed to provide the first level of privacy through the use of an aggregation protocol. An aggregation protocol ensures that the server only has access to the sum of the individual reports sent from each device. The aggregate differential privacy guarantees are achieved after the data leaves the aggregation protocol. As a secondary layer of privacy protection, local differential privacy guarantees are enforced on device: any sensitive data leaving a user's device will be protected by local differential privacy. We will introduce these two types of privacy guarantees in this section.

\subsubsection{Local Differential Privacy}

Let us first introduce local differential privacy. This privacy guarantee is enforced on-device via the use of local randomizers.

\begin{defn}[Local Randomizer \cite{DR14,Kasiviswanathan:2008,Erlingsson2020}]\label{LDP}
Let $\A: \cD \to \cY$ be a randomized algorithm mapping a data entry in $\cD$ to an output space $\cY$.  
The algorithm $\A$ is an $\epsilon$-DP local randomizer if for all pairs of data entries $d,d'\in\cD$, and all events $E\subset\cY$, we have $$
- \epsilon \leq \ln\left(\frac{\Pr[\A(d) \in E ]}{\Pr[\A(d') \in E ]}  \right)\leq \epsilon.
$$
\end{defn}
\noindent The privacy parameter $\eps$ captures the \emph{privacy loss} consumed by the output of the algorithm. If $\eps = 0$ then the output is independent of the input and we are ensuring perfect privacy, while $\eps = \infty$ enforces no constraints on $\A$ and hence provides no privacy guarantee. Differential privacy for an appropriate $\eps$ ensures that it is impossible to confidently determine what the individual contribution was, given the output of the mechanism. In turn, this implies a strong Bayesian interpretation: the posterior given the output of the mechanism is point-wise multiplicatively close to the prior one may have about an individual's data.

In general, differential privacy is defined for algorithms with input databases with more than one record. In the local model of differential privacy, algorithms may only access the data through a local randomizer so that no raw data leaves the device. For a single round protocol, local differential privacy is defined as follows:
\begin{defn}[Local Differential Privacy \cite{Kasiviswanathan:2008}]\label{localDP}
Let $\A: \cD^n \to \cZ$ be a randomized algorithm mapping a dataset with $n$ records to some arbitrary range $\cZ$.  The algorithm $\A$ is $\epsilon$-local differentially private if it can be written as $\A(d^{(1)}, \cdots, d^{(n)}) = \phi\left(\A_1(d^{(1)}), \cdots, \A_n(d^{(n)}) \right)$ where the $\A_i: \cD \to \cY$ are $\epsilon$-local randomizers for each $i \in [n]$ and $\phi: \cY^n \to \cZ$ is some post-processing function of the privatized records $\A_1(d^{(1)}),\cdots, \A_n(d^{(n)})$.  Note that the post-processing function does not have access to the raw data records.
\end{defn}

\noindent Local differential privacy guarantees are enforced on-device. 

The definition of a local randomizer given in Definition~\ref{LDP} is sometimes called the \emph{replacement model} of local differential privacy. A slightly weaker notion referred to as the deletion model \cite{Erlingsson2020} can also be used to provide meaningful privacy guarantees with slightly more accurate results when local differential privacy is the primary privacy guarantee. Since we are concerned with aggregate differential privacy in this paper, we will focus on the replacement model, which is more appropriate for combining with aggregate differential privacy.

\subsubsection{Aggregate Differential Privacy}

The aggregate model of differential privacy is a distributed model of computation in which, as in the local model, clients hold their own data and a server communicates with the clients in a federated manner to perform data analysis. In addition, the model includes a \emph{aggregation protocol} that sums the local reports, with the guarantee that the output of the aggregation protocol does not reveal anything about the local reports except their sum\footnote{The term aggregation protocol is commonly used to refer to protocols that perform other types of aggregation. In this paper, we will only be referring to aggregation protocols that output the sum of the input values}. The aggregation protocol computes the sum of the local reports, and this is released to the server\footnote{The aggregate model of DP is a special case of the more general and common shuffle model of differential privacy introduced in \citep{ErlingssonFMRTT19, CheuSUZZ19} which permutes the local reports, rather than summing. Since the system we propose uses an aggregation protocol, we focus on the aggregate model of DP in this paper. The sum of the local reports reveals strictly less information than a permutation of the local reports, so the aggregate model enjoys stronger privacy guarantees.}

\begin{defn}\label{defnaggregateDP}
A single round algorithm $\A$ is $(\eps, \delta)$-DP in the aggregate model if the output of the aggregation protocol on two datasets that differ on the data of a single individual are close. Formally, an algorithm $\A:\cD^n\to\cZ$ is $(\eps, \delta)$-DP in the aggregate model if the following conditions both hold:
\begin{itemize}
\item it can be written as $\A(d^{(1)}, \cdots, d^{(n)})=\phi(\aggregator(f(d^{(1)}), \cdots, f(d^{n)}))$ where $f:\cD\to\cZ$ is a randomized function that transforms that data, $\aggregator$ is an aggregation protocol, and $\phi:\cY^n\to\cZ$ is some post-processing of the aggregated report
\item for any pair of datasets $D$ and $D'$ that differ on the data of a single individual, and any event $E$ in the output space,
\[
\Pr(\A(D)\in E)\le e^{\eps} \Pr(\A(D')\in E)+\delta.
\]
\end{itemize}
Note that the post-processing function takes the aggregation as its input and does not have access to the individual reports.
\end{defn}

A multi-round algorithm $\A$ is $(\eps,\delta)$-DP in the aggregate model if it is the composition of single round algorithms which are DP in the aggregate model, and the total privacy loss of $\A$ is $(\eps,\delta)$-DP.
One can formulate a version of Definition~\ref{defnaggregateDP} for multi-round algorithms, for a more in-depth discussion see \cite{Jain2021ThePO}. In this paper, we will assume that the privacy guarantee of multi-round algorithms is computed using standard theorems that bound the privacy guarantee of the adaptive composition of multiple differentially private single round algorithms~\cite{DMNS06, DRV}. This is the case for most algorithms of interest in this setting.
In the system described in this paper, we will always use functions $f$ that are $\eps_0$-DP local randomizers. This ensures that all our algorithms are $\eps_0$-differentially private in the local model \emph{and} $(\eps,\delta)$-differentially private in the aggregate model. When each user uses a DP local randomizer to communicate their data, the privacy guarantee in the aggregation model, $\eps$, can be bounded by a quantity that is a function of both $\eps_0$, the privacy guarantee in the local model, and $n$, the number of users that participate in the aggregation protocol. The following result, commonly referred to as \emph{privacy amplification by aggregation}, shows how $\eps$ decreases as $n$ increases. This phenomena was originally observed by \citep{ErlingssonFMRTT19} and \citep{CheuSUZZ19}, although the version we state here is due to \cite{Feldman:2022}.

\begin{thm}\label{amplificationbyaggregation}
Suppose $\A:\cD^n\to\cZ$ is an algorithm that can be written as $\A(d^{(1)}, \cdots, d^{(n)})=\phi(\aggregator(f(d^{(1)}, \cdots, f(d^{n)}))$ where $f:\cD\to\cZ$ is a randomized function that transforms that data, $\aggregator$ is an aggregation protocol, and $\phi:\cY^n\to\cZ$ is some post-processing of the aggregated report. If $f$ is an $\eps_0$-DP local randomizer in the replacement model, then for any $\delta\in[0,1]$ such that $\eps_0\le\ln(\frac{n}{8\ln(2/\delta)}-1)$, $\A$ is $(\eps,\delta)$-DP in the aggregate model where
\begin{equation}\label{shufflebound}
\eps \le \log\left(1+(e^{\eps_0}-1)\left(\frac{4\sqrt{2\log(4/\delta)}}{\sqrt{(e^{\eps_0}+1)n}}+\frac{4}{n}\right)\right).
\end{equation}
\end{thm}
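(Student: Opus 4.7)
The plan is to reduce the claim to the known privacy amplification by shuffling bound of \cite{Feldman:2022}, which gives exactly the right-hand side of \eqref{shufflebound}. The key observation, already flagged in the footnote to Definition~\ref{defnaggregateDP}, is that the sum output by \aggregator{} is a deterministic function of the \emph{multiset} of local reports, and in particular of any shuffled version of them: $\sum_i f(d^{(i)}) = \sum_i f(d^{(\pi(i))})$ for every permutation $\pi$. Therefore the aggregate output can be obtained by post-processing the shuffled output, and since $(\eps,\delta)$-DP is preserved under post-processing, any shuffle-model bound immediately transfers to the aggregate model.

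Concretely, I would first define the shuffle-model algorithm $\A_{\sf shuf}(d^{(1)},\dots,d^{(n)}) = \pi(f(d^{(1)}),\dots,f(d^{(n)}))$ where $\pi$ is a uniformly random permutation. For neighboring datasets $D, D'$ differing in one record and any event $E$ in the range of $\A$, I would write
\[
\Pr(\A(D) \in E) = \Pr(\A_{\sf shuf}(D) \in g^{-1}(E))
\]
where $g$ is the composition of summation and the post-processor $\phi$ in Definition~\ref{defnaggregateDP}. The analogous identity holds for $D'$, so it suffices to bound the ratio at the shuffle level.

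The second step is to apply the shuffle amplification theorem of \cite{Feldman:2022}: whenever $f$ is an $\eps_0$-DP local randomizer in the replacement model and $\eps_0 \le \ln\bigl(\tfrac{n}{8\ln(2/\delta)}-1\bigr)$, the shuffle $\A_{\sf shuf}$ is $(\eps,\delta)$-DP in the aggregate model with $\eps$ exactly the expression in \eqref{shufflebound}. This is the technical heart of the argument, and in \cite{Feldman:2022} is proved via a clever decomposition of the local randomizer into a ``blanket'' distribution plus an atypical component, followed by a tight concentration argument on the number of users whose reports fall in the blanket and a hockey-stick divergence computation over the resulting mixture. I would not reprove it here; rather I would verify that the regularity hypothesis on $\eps_0$ and $\delta$ in our statement is precisely the one required by their theorem.

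The one point requiring slight care is ensuring that the ``post-processing'' reduction is literally valid in the adaptive formulation implicit in $\A = \phi \circ \aggregator \circ f$ of Definition~\ref{defnaggregateDP}: since $\phi$ acts only on the aggregate and not on the raw per-user reports, the factorization $\phi \circ \text{sum} \circ \pi$ is well-defined and the shuffle bound transfers cleanly. I do not expect any real obstacle beyond the bookkeeping; the main ``hard'' step—the shuffle amplification inequality—is imported wholesale from \cite{Feldman:2022}.
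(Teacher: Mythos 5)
Your proposal matches the paper's (implicit) argument: the paper does not reprove the amplification bound but cites it directly from \cite{Feldman:2022}, observing in a footnote that the aggregate model is a post-processing of the shuffle model (the sum is a function of the shuffled multiset), so shuffle-model bounds transfer verbatim. You correctly identify both the post-processing reduction and the precise hypotheses under which the imported bound applies, so there is nothing to add.
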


Note that Theorem~\ref{amplificationbyaggregation} immediately implies a corresponding theorem for the case where the functions $f$ are $\eps_0$-DP local randomizers in the deletion model by replacing $\eps_0$ with $2\eps_0$ in eqn~\eqref{shufflebound}. 
This upper bound on $\eps$ is approximately $O\left(\eps_0\frac{\sqrt{\log(1/\delta)}}{\sqrt{n}}\right)$ when $\eps<1$ and $O\left(\frac{\sqrt{e^{\eps_0}\log(1/\delta)}}{\sqrt{n}}\right)$ when $\eps\ge1$, where $\eps_0$ is the local privacy parameter in the replacement model. Numerical computations of the privacy amplification can obtain tighter bounds on the privacy of the output of the aggregation protocol. In the system described in this paper, we propose computing the privacy amplification numerically using the technique presented in \cite[Theorem 3.1]{Feldman:2022}. This code is available at \url{https://github.com/apple/ml-shuffling-amplification}. In Figure~\ref{varyingnamplification} we show how the privacy of the output of the aggregation protocol changes as a function of the cohort size for two given local privacy guarantees $\eps_0=3$ and $\eps_0=6$. For example, we can see that if each user sends their data using a $3$-DP local randomizer, then the privacy of the output of the aggregation protocol is at most $(1,10^{-6})$-DP in the aggregate model once $n>1,000$ and decreases quickly as the cohort size grows.

\begin{figure}
\begin{center}
\includegraphics[scale=0.5]{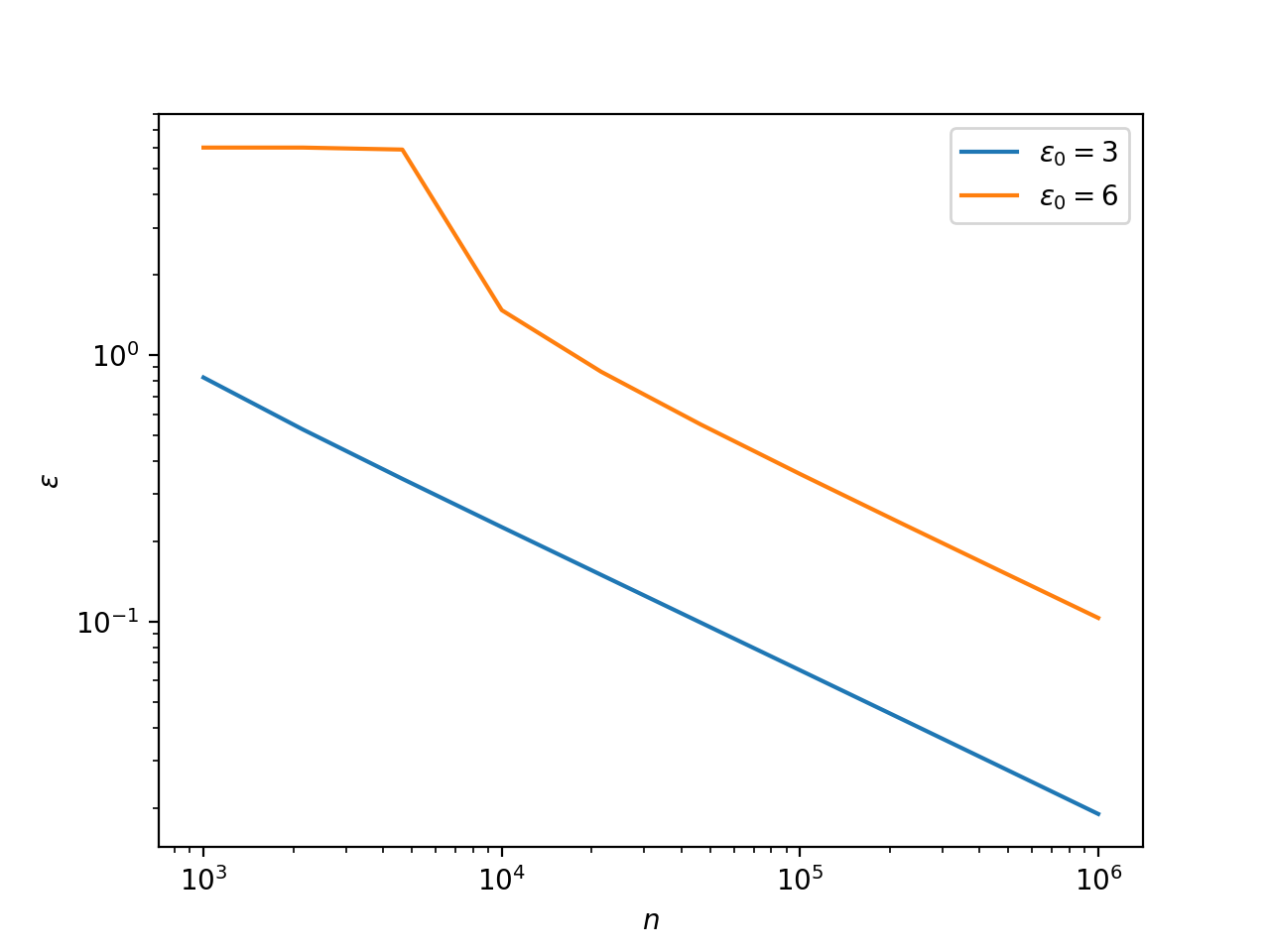}
\end{center}
\caption{Cohort aggregate privacy guarantees ($(\eps, 10^{-6})$) as a function of cohort size ($n$) for $\eps_0=3$ and $\eps_0=6$.}
\label{varyingnamplification}
\end{figure}

We remark that one can also measure the privacy amplification in terms of an alternative definition of privacy known as R\'enyi differential privacy~\cite{Mironov2017RnyiDP}. Using R\'enyi differential privacy will typically allow us to get stronger bounds for the privacy guarantee of composing multiple differentially private algorithms. We refer the reader to~\cite{Feldman:2022, Kim:2021} for details, and for simplicity, assume here that we compose $(\eps,\delta)$-DP guarantees using the advanced composition theorem~\cite{DRV}.

Since we want the output of the aggregation protocol to achieve the required level of privacy regardless of the cohort size, we suggest using an aggregation protocol that has the additional functionality that given a minimum cohort size $m$, the protocol will only release the aggregate if the actual cohort size is greater than $m$. If too few users participate in answering a query, the aggregation protocol does not output anything.  Given a local privacy parameter $\eps_0$, and a desired privacy guarantee on the output of the aggregation protocol $\eps$, we can compute the minimum cohort size $m$ for which the privacy guarantee would hold if $m$ users were to participate. The minimum cohort is communicated from the device to the aggregation protocol.
We will refer to the privacy guarantee of the output of the aggregation protocol with this additional functionality as the \emph{cohort-aggregate privacy guarantee}.

Our privacy analysis of the aggregate model via Theorem~\ref{amplificationbyaggregation} implicitly relies on each user gaining privacy from the noise added to the aggregate by all the other users. While we assumed that all clients are honest, this can be easily relaxed to assuming that at least a constant fraction of clients are honest and applying the privacy amplification by aggregation to the subset of honest clients.
An alternative approach to reducing the trust in user's behaving honestly would be to equip the aggregation protocol with the ability to add the noise needed for privacy.

\section{System Overview}\label{systemsoverview}

Our proposed architecture consists of device- and server- side infrastructure. We will use the term \emph{query} to refer to a single data collection event on device,
and \emph{analysis} to refer to a collection of queries performed for a given task.
At a high level,
the device stores information about the kind of queries that are allowed in the analysis as well as their privacy budgets. This information is hard-coded as part of the device OS, and encodes the immutable parts of the policy that ships with the OS. The data, as it is generated, get saved in a secure data storage. The server infrastructure allows an analyst to send interactive queries to devices, which are then verified (that they are allowed queries) on device. The response, with noise added, gets sent to an aggregation protocol, which allows the analyst to see the aggregate results of the query. We provide details of various parts of the system in this section.

Our proposed design allows for interactive queries. However, the queries are constrained to belong a certain class of queries and a description of this query class gets stored on device. This description includes the set of attributes that the analysis is allowed to access, as well as the kinds of queries that are permitted. We defer more details about the query class constraints and verification to Section~\ref{queryverification}.

 In addition, the answers to these queries are governed by a set of privacy constraints.
We impose global constraints for the whole analysis as well as per-data field privacy constraints. Section~\ref{sec:privacy-budget} has further details on the privacy constraints.

\begin{figure}
	\begin{center}
			\includegraphics[scale=0.15]{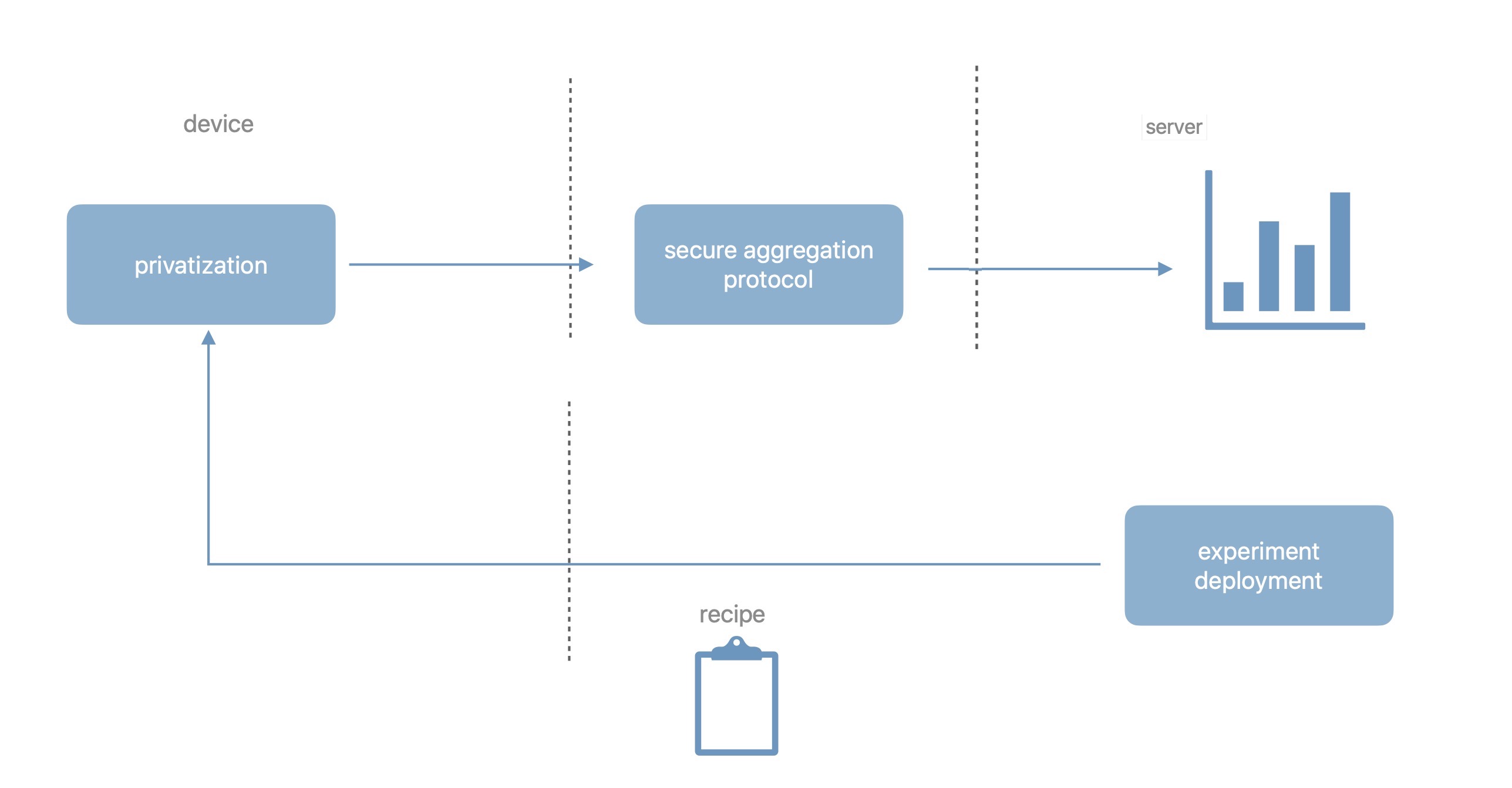}
	\caption{Proposed system architecture. Histogram query results are protected by differential privacy constraints. Reports from multiple devices are aggregated using an aggregation protocol so that the server only sees the aggregate report from a cohort of a pre-specified minimum size. 
	}
		\label{fig:system}
	\end{center}
\end{figure}

                \lstset{language={},literate={``}{\textquotedblleft}1, xleftmargin=.3in, numberstyle=\tiny,xrightmargin=-.1in}

Prior to being queried, an on-device framework should be used to provide secure storage for the data. This framework should be optimized for fast sequential access to user data, since this data is generated by user actions we sometimes call these data generation events simply ``events". Each event is donated to an approved data stream with unique identifier, and fixed schema. 
For example, when a user types a phrase using their keyboard,  an event would be added to the corresponding ``keyboard" stream.

An access control mechanism can be used to ensure that this data store is readable only by queries which are approved to access this data. Further, a retention policy assures that each event is stored on the device for a limited, approved period of time.
Figure \ref{fig:device-system} contains an example of a potential on-device architecture.

\begin{figure}[H]
	\begin{center}
			\includegraphics[scale=0.15]{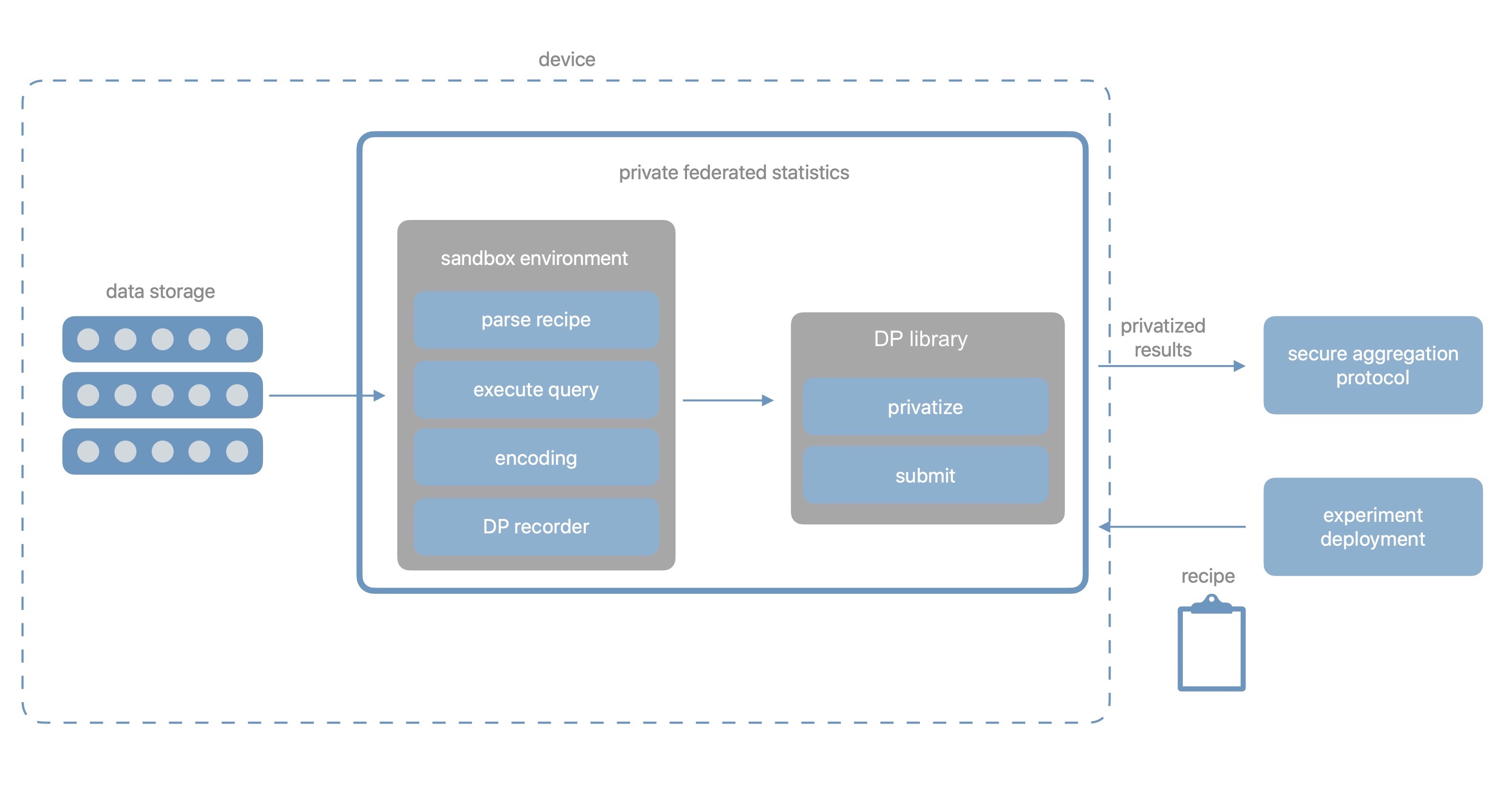}
	\caption{On device architecture example. The query and data are processed in the sandbox environment, then passed to the DP library, which contains algorithms carefully vetted to satisfy differential privacy.}
		\label{fig:device-system}
	\end{center}
\end{figure}

\subsection{Device Enrollment}\label{device_enrollment}
In order to run a query from the server, a \emph{recipe} is deployed to all opted-in devices. A recipe includes a query which is executed on the data stream, and a schema to encode the query output to a one hot vector.
For example, for the algorithm outlined in the introduction, for interactively learning n-grams of known words, one query in this analysis would ask users to send a randomly chosen 3-gram, among those they have typed, that starts with a member of a small list of popular 2-grams.\footnote{If the user does not have any 2-grams that match the requirements, then they should send a random 3-gram as sending nothing can itself be disclosive. Further, it is sufficient to send a hash of the list of popular 2-grams to the devices.}
In order to enable learning popular 3-grams, we send the list of popular 2-grams in a prefix tree data structure, as well as set of known tokens for completing this prefix tree. If a token is not in this list, it will be tokenized to a special token \emph{OOV} which denotes an \emph{Out of Vocabulary} token.

In this example, the recipe sent from the server needs to contain several pieces of information. Firstly, the recipe should contain the query itself, and the requested local and cohort aggregate privacy budgets for this query. In addition, a scheme to encode the query output as a one-hot vector should be included. This ensures the query output is an appropriately formed input for the types of local randomizers that arise in differentially private histogram estimation. This scheme is often defined by the tuple of features we intend to learn, and a bucketing rule for each feature. For an example, see Appendix~\ref{appendix:deviceenrollment}.

If any data fields accessed by the query are determined to be non-sensitive, then these can also be specified in the recipe. For example, when learning n-grams of known words, the locale (device language, eg. american english) may be deemed non-sensitive. The device should keep a list of approved non-sensitive data fields hard-coded as part of the device OS. The recipe may also contain meta information such as a recipe identifier or version number, which can be attached to the privatized query response when being sent to the server.

\subsection{Query Class Verification}\label{queryverification}

As mentioned earlier, the analysis is restricted to only ask queries that lie within the approved query class. In the non-adaptive setting, this query class may be a singleton set (a set that contains only one element), and the architecture mainly serves to co-ordinate from the server when to run the query. More generally, a broader set of queries may be allowed. Any query class should be supported by a query class verifier; a piece of code that runs on device and can verify that the query given in a recipe is a member of the query class.
When the number of allowed queries is small, this can be achieved by simply listing the set of allowable queries.

As an example of a broader query class, suppose that the query class is specified by an allowed set of data fields, and any query that only accesses those data fields is a permitted query. One could verify that a query belonged to this query class by configuring the underlying database management system (DBMS) to restrict access to the allowed list of fields. Upon execution of the query, the DBMS will return an error indicating that the query was denied if a non-allowed field is referenced.

\subsection{Privacy Budget Verification}\label{sec:privacy-budget}
All analyses run on the system we propose satisfy strong differential privacy guarantees.
Each analysis must satisfy a cohort aggregate privacy budget that is shared over all the queries contained in that analysis. Recall that the cohort aggregate privacy budget is the privacy guarantee of the output of an aggregation protocol with the additional functionality that it can enforce a minimum cohort size.  Additionally, each query must satisfy a specified local privacy budget.
Since queries within an allowable query class may access data types with varying sensitivity, we also employ a more fine-grained differential privacy budget for individual queries. Each data field is assigned an allowed local privacy budget, an allowed cohort aggregate privacy budget and an allowed number of types that data type may be accessed per analysis. Any query must satisfy the most stringent privacy conditions of any data field that is accessed by that query.

Table~\ref{ondevicebudgetaccountant} summarizes hypothetical parameter settings for the proposed on-device budget accountant that tracks and limits the differential privacy parameter for each data field, for queries related to the use of the keyboard. In this example, a team interested in improving next word prediction may request access to $n$-grams typed with the keyboard, as well as metrics related to the performance of the currently deployed keyboard model like model perplexity. More sensitive fields like bucketed age may be useful for assessing how performance varies among subpopulations. By using this fine grained privacy budget account, we can ensure that not only is the total differential privacy parameter bounded, but particularly sensitive information is further protected.
When a query is sent to the device, a series of privacy budget checks ensure that the device has enough remaining differential privacy budget to run the query with the desired differential privacy parameters:

\begin{itemize}
\item \textbf{Check 1:}  Check that the analysis has enough differential privacy budget remaining and has not exceeded its allowed number of queries per Table~\ref{usecaseprivacybudgets}. Formally, $\eps+\eps_{\rm Used}\le \eps_{\rm Allowed}$ and  $k_{\rm Used}+1\le k_{\rm Allowed}$.
\item \textbf{Check 2:} Check that data fields used in the query have enough differential privacy budget remaining and have not exceeded their allowed number of queries per Table~\ref{datatypeprivacybudgets}. Formally, for every data field accessed by the query, $\eps_0\le \eps_{\rm Allowed, Data\; field}^{\rm local}$, $\eps+\eps_{\rm Used, Data\; field}\le \eps_{\rm Allowed, Data\; field}$ and $k_{\rm Used, Data\; field}+1\le k_{\rm Allowed, Data\; field}$. This check requires a complete and accurate list of data fields that are accessed by the query. 
\item \textbf{Check 3:} The cohort-aggregate differential privacy guarantee $\eps$ will be achieved by the aggregation of at least $m$ $\eps_0$-DP local reports.
\end{itemize}

If all checks are passed, the device runs the query. The differential privacy budget accountant is updated to reflect the privacy loss incurred by running the query. Formally, $\eps_{\rm Used} \mapsto \eps_{\rm Used}+\eps$, $k_{\rm Used} \mapsto k_{\rm Used}+1$, and for every data field accessed by the query $\eps_{\rm Used, Data\; field} \mapsto \eps_{\rm Used, Data\; field}+\eps$, $k_{\rm Used, Data\; field} \mapsto k_{\rm Used, Data\; field}+1$. If any of the checks fail, the device does not run the query. While we have described the privacy checks as using the basic composition theorem, these can naturally be adapted to the use of advanced composition theorem, or other privacy accounting methods.

While we do not formally track the local privacy budgets, the limit on the allowed number of reports imposes an upper bound of $\eps_{\rm Allowed}^{\rm local} \times k_{\rm Allowed}$ on the local privacy loss for the analysis, and a potentially stronger upper bound of $\eps_{\rm Allowed, Data\; field}^{\rm local} \times k_{\rm Allowed, Data\; field}$ on the local privacy loss per data field.

\begin{center}
\begin{table}
\begin{subtable}{1\textwidth}
\begin{tabular}{  | L{3.5cm}| L{3.5cm} | L{3.5cm} | L{3.5cm} | }
  \hline
 \textbf{\small Allowed Cohort Aggregate Privacy Budget $\eps_{\rm Allowed, Analysis}$} & \textbf{\small Used Cohort Aggregate Privacy Budget $\eps_{\rm Used, Analysis}$} & \textbf{\small Allowed Number of Reports $k_{\rm Allowed, Analysis}$} & \textbf{\small Number of Reports Used $k_{\rm Used, Analysis}$} \\
  \hline\hline
0.5 & 0 & 1 & 0\\
  \hline
\end{tabular}
\caption{Example of Maximum Privacy Budgets per Analysis allocations. Approved analyses must abide by their own approved privacy budgets. These budgets are adjusted to suit the sensitivity of the data used. The number of local reports that an analysis can receive from a given device is also limited.}\label{usecaseprivacybudgets}
\end{subtable}
\begin{subtable}{1\textwidth}
\begin{tabular}{ | L{1.7cm} | L{2.5cm}| L{2.5cm} | L{2.5cm} | L{2.5cm} | L{2.5cm} |}
  \hline
\textbf{\small Data Field} & \textbf{\small Allowed Local Privacy Budget $\eps_{\rm Allowed, Data\; field}^{\rm local}$} & \textbf{\small Allowed Cohort Aggregate Privacy Budget $\eps_{\rm Allowed, Data\; field}$} & \textbf{\small Used Cohort Aggregate Privacy Budget $\eps_{\rm Used, Data\; field}$} & \textbf{\small Allowed Number of Reports $k_{\rm Allowed, Data\; field}$} & \textbf{\small Number of Reports Used $k_{\rm Used, Data\; field}$} \\
  \hline\hline
  $n$-gram & 5 & 1 & 0 & 1 & 0\\
  \hline
  Bucketed age & 2 & 0.3 & 0 & 1 & 0 \\
  \hline
  Model perplexity & 8 & 1 & 0 & 1 & 0\\
  \hline
\end{tabular}
\caption{Example of Maximum Privacy Budgets per Data Field allocations. For each data field we bound the total privacy loss of any query that access that data field. These budgets are adjusted to suit the sensitivity of the data field. }\label{datatypeprivacybudgets}
\end{subtable}

\caption{ Example of a potential on-device differential privacy budget accountant for keyboard analysis.}\label{ondevicebudgetaccountant}
\end{table}
\end{center}

\subsection{Privatization}
Once it has been established that that there is enough privacy budget remaining to run the query, the device uses a local randomizer to respond to the query. We will describe one possible randomizer choice, private one-hot encoding (OHE), in more detail in the appendix. Briefly, private OHE involves each user representing their data as a one-hot encoded binary array, then flipping each coordinate of this array independently with some pre-specified probability. After local privatization, the randomizer output is sent to the aggregation protocol. For instance, if the aggregation is based on PRIO~\cite{Prio}, the privatized record would be secret-shared into shares that would be encrypted using the public keys of non-colluding servers. Finally, the query, privatized record, and any function of the privatized record that leaves the device (e.g. shares of the privatized record) can be logged to the device for auditability.

\subsubsection{Aggregation}
In order to build a system with strong privacy guarantees, the aggregation protocol should guarantee that it does not reveal anything about the local reports except their sum. However, our overall architecture is oblivious to the particular aggregation protocol. One option is to use a multiple server architecture as in PRIO~\cite{Prio}. Such a system with two servers has been used in earlier systems~\cite{ENPA:2021, enpatalk, firefox}. Each device secret shares their local report into two additive shares, which are sent to two non-colluding servers.
Each server aggregates the shares it receives within a time window, and conditioned on the batch size being sufficiently large, publishes its aggregate to a third server. Individual shares are only stored for this time window and are deleted right after the batch aggregation job finishes. The third server aggregates all these results, and returns the result to the analyst. By performing the aggregation in this manner, we ensure that no party has access to any individual local report.

\section{Risk Analysis}

In this section, we will address how the system proposed in this paper mitigates some of the key risks that might arise when implementing a system that allows analysts to ask queries from the server. In this work, we have assumed an implementation of the aggregation protocol that guarantees that the adversary can only get access to aggregates of at least B users, where the batch size B is routed through, and logged on, the device. One can imagine multiple ways to implement such a functionality, and different implementations would entail their own risk analyses.  In this section, we discuss risks and mitigations for the whole system assuming the aggregation protocol with this functionality is implemented securely. Below is a list of potential risks, and how the system mitigates against them.

\begin{itemize}
\item \textbf{Flexible queries could allow analysts to target small populations.} This risk is mitigated by the fact that differential privacy protects users against individual privacy loss \cite{DR14}, and protects against information being leaked about small populations. The main protection is from the aggregate differential privacy guarantee, with the local differential privacy guarantee as an additional guarantee. Further, an aggregation protocol with the additional functionality that it can enforce a lower bound on the cohort size can ensure that only aggregate information is released.

We note that this protection depends on the cohort containing a sufficiently large number of \emph{honest} users. Thus, the aggregation protocol must mitigate the risk of sybil attacks. Additionally, observe that the targeting criteria (the features of users that a query attempts to learn) being public allows for auditability, which further mitigates the targeting risk.

\item \textbf{Broad query classes could allow for accidental approval of sensitive queries.} This risk is mitigated by allowing the restrictiveness (or broadness) of the query class to be tuned to suit the setting. The query class
can be vetted for concerns other than individual privacy.
\item \textbf{Flexible queries from server may allow an analyst to set their own privacy budget.}
This risk is mitigated by strict privacy budgeting on-device that ensures that privacy loss is carefully controlled. Local privacy budgets are stored and enforced on-device.
	Cohort Aggregate privacy budgets are routed through the device and enforced by the aggregation protocol. The device only needs to trust the aggregation protocol to enforce the minimum cohort size.
\item \textbf{Flexible queries may allow an analyst to ask questions that are not part of the approved query class.}	This risk is mitigated by query verification on-device that ensures only queries in the approved query class are answered by the device. Query logs on-device can also provide further transparency about the queries being asked and answered by the device.
\end{itemize}

\section{Conclusion and Open Problems}

In this work, we proposed a system for enabling interactive federated statistics with strong privacy guarantees. The privacy guarantees are supported by an aggregation protocol that ensures that only aggregate statistics, with cohort aggregate privacy guarantees can be seen by any analyst. This aggregate privacy guarantee is reinforced by local privacy guarantees that are enforced on-device. This architecture allows the user device to maintain control over the privacy assurance and ensures that inadvertent error by an analyst leads to utility loss, \emph{not} privacy loss.

It has been shown in a number of prior works~\cite{Abadi:2016, McMahanRT018, Bagdasaryan:2022} that in the federated setting, private interactive algorithms (enabled by allowing analysts to ask adaptive queries from the server) can greatly improve the accuracy of the private statistics learned. Further research is needed to expand the class of statistics that we can learn in this framework. In practical deployments, user's can have multiple data points.
This additional data can be utilized by adaptive algorithms, for example, the algorithm described in the introduction for learning new phrases can potentially gain from carefully selecting the next data point that the user communicates.
Additionally, there can be heterogeneity in both the quantity and distribution of user data. Heterogeneity poses additional challenges for analysts both in terms of algorithm design and interpretation of results. How to best utilize user data in the multiple-data-points-per-user setting is still an active, and important, area of research.

\section*{Acknowledgements}
We thank Úlfar Erlingsson, Julien Freudiger for their guidance during the early stages of this project; Rohith Prakash, Noah Jacobs, Andrew Cherkashyn, Mona Chitnis, Roman Holenstein, Gaurav Chandalia, Chiraag Sumanth, Mayank Yadav, Abhishek Bhowmick, JP Lacerda, Yuan Li, Zhong Wang, Michael Scaria, Kristine Guo, Oliver Chick, Hwasung Lee, Yi Sheng Chan for their help with this effort.

\bibliographystyle{abbrv}
\bibliography{reference}

\appendix

\section{Asymmetric Private One-Hot Encoding}\label{sec:OHE}

\emph{Asymmetric private one-hot encoding} outputs a histogram of counts over a domain $\cD$, for a dataset consisting of $n$ individuals. This algorithm is simple and has high accuracy, but has high communication complexity when the data domain is large. At a high level, one hot encoding is composed of a client-side algorithm and a server-side algorithm.  The client-side algorithm is an $\epsilon$-DP local randomizer and ensures that the data that leaves the user's device is $\epsilon$-locally differentially private. The server side algorithm simply aggregates the local responses, and de-bias the result. This algorithm was proposed and used in RAPPOR \cite{erlingsson2014rappor}.  Let $f:\cD \rightarrow \mathbb{R}$ be an oracle that gives the frequency of an element in $\cD$ from a given dataset $\{d^{(1)},\cdots, d^{(n)}\}$. We design a differentially private algorithm $\A$ that outputs an oracle $\tilde f: \cD\rightarrow \mathbb{R}$ which is an unbiased estimate of $f(d)$, that is $\forall d\in\cD, \E \left[ \tilde f(d) \right] = f(d)$, and has small variance.

\subsection{Client-Side Algorithm}

The client encodes their bit as a vector of length $|\cD|$, which is 1 in the position of the client's data, and 0 elsewhere. They then independently send the value of each coordinate using a standard differentially private algorithm that is an asymmetric version of randomized response. Let $\texttt{Bernoulli}(p)$ be a Bernoulli random variable which is $1$ with probability $p$ and $0$ with probability $1-p$.
Algorithm~\ref{algclient:OHE} gives details of the client-side algorithm. 

\begin{algorithm}[htb]
	\caption{Client-Side $\AclientOHE$} \label{algclient:OHE}
	\begin{algorithmic}[1]
		\REQUIRE Data element $d \in \cD$; $\epsilon$.
		\STATE Initialise $v, \tilde{v} \in \{0\}^{\cD}$
		\STATE $v_d = 1$ \label{datadependence}
		\FOR{$d'\in\cD$}
			\STATE $\tilde{v_{d'}} \sim \begin{cases} \texttt{Bernoulli}(1/2) & \text{if } v_{d'}=1 \\ \texttt{Bernoulli}(1/(e^{\eps}+1)) & \text{if } v_{d'}=0 \end{cases}$
		\ENDFOR
		\RETURN $\tilde{v}$
	\end{algorithmic}
\end{algorithm}

\begin{thm}[Privacy guarantee] \label{privacyOHE}
For any $\epsilon>0$, $\AclientOHE$ is $\epsilon$-locally differentially private in the replacement model.
\end{thm}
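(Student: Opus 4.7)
The plan is to verify the definition of an $\eps$-DP local randomizer (Definition~\ref{LDP}) directly. Since the output space $\{0,1\}^{\cD}$ is discrete, it suffices to bound the pointwise likelihood ratio $\Pr[\AclientOHE(d)=\tilde v]/\Pr[\AclientOHE(d')=\tilde v]$ for arbitrary inputs $d,d'\in\cD$ and outputs $\tilde v\in\{0,1\}^{\cD}$; the bound on arbitrary events $E\subset\cY$ then follows by summation.

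First I would exploit the crucial structural observation that the coordinates of $\tilde v$ are drawn independently, so the joint likelihood factors as a product over $d''\in\cD$ of per-coordinate ratios. For every coordinate $d''\notin\{d,d'\}$, both inputs yield $v_{d''}=0$, so the per-coordinate distribution is identical $\mathrm{Bernoulli}(1/(e^{\eps}+1))$, contributing a factor of $1$ to the ratio. Thus the entire ratio reduces to the product of just two per-coordinate ratios, one at coordinate $d$ and one at coordinate $d'$.

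Next I would tabulate the four relevant per-coordinate distributions: under input $d$, coordinate $d$ is $\mathrm{Bernoulli}(1/2)$ and coordinate $d'$ is $\mathrm{Bernoulli}(1/(e^{\eps}+1))$, while under input $d'$ these are swapped. Computing the ratio of the probabilities of each value in $\{0,1\}$ at each of these two coordinates, the per-coordinate ratio at coordinate $d$ is either $(e^{\eps}+1)/2$ or $(e^{\eps}+1)/(2e^{\eps})$, and at coordinate $d'$ it is either $2/(e^{\eps}+1)$ or $2e^{\eps}/(e^{\eps}+1)$. To maximize the overall product, one pairs $(e^{\eps}+1)/2$ with $2e^{\eps}/(e^{\eps}+1)$, yielding exactly $e^{\eps}$; to minimize, one pairs the remaining two factors, yielding $e^{-\eps}$. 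All four other pairings give intermediate values bounded in $[e^{-\eps},e^{\eps}]$.

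Nothing here is delicate, so there is no real obstacle; the only thing to keep straight is the bookkeeping on which coordinate uses which Bernoulli parameter under which input, and that the asymmetry of the two Bernoulli parameters ($1/2$ versus $1/(e^{\eps}+1)$) was chosen precisely so that the extreme per-coordinate ratios at the two ``active'' coordinates multiply to exactly $e^{\pm\eps}$. I would conclude by noting that summing the pointwise inequality over $\tilde v\in E$ yields the required $\Pr[\AclientOHE(d)\in E]\le e^{\eps}\Pr[\AclientOHE(d')\in E]$ and the symmetric lower bound, establishing $\eps$-local differential privacy in the replacement model.
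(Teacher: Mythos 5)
Your proof is correct and takes essentially the same route as the paper: factor the likelihood ratio coordinate-wise, cancel all coordinates outside $\{d,d'\}$, and bound the product of the two remaining per-coordinate ratios. The paper shortcuts the final case analysis by noting that the two $\mathrm{Bernoulli}(1/2)$ factors both equal $1/2$ regardless of the observed bit and therefore cancel, leaving a single ratio $\Pr[\mathrm{Bernoulli}\bigl(\tfrac{1}{e^{\eps}+1}\bigr)=v_{d'}]/\Pr[\mathrm{Bernoulli}\bigl(\tfrac{1}{e^{\eps}+1}\bigr)=v_d]\in[e^{-\eps},e^{\eps}]$; your four-case tabulation reaches the same conclusion (a small slip: after the maximizing and minimizing pairings only two remain, not four, and both give exactly $1$, but that is immaterial).
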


\begin{proof}[Proof of Theorem~\ref{privacyOHE}]
Let $d, d'\in\cD$ be two data points and $\eps>0$. Therefore, for any $(v_{d''})_{d''\in\cD}\in\{0,1\}^{\cD}$, 
\begin{align*}
\frac{\Pr(\AclientOHE(d)=(v_{d''})_{d''\in\cD})}{\Pr(\AclientOHE(d)=(v_{d''})_{d''\in\cD})} &= \frac{\Pr({\rm Bernoulli}\left(1/2\right)=v_d)\prod_{d''\in\cD, d''\neq d}\Pr({\rm Bernoulli}\left(\frac{1}{e^{\eps}+1}\right)=v_{d''})}{Pr({\rm Bernoulli}\left(1/2\right)=v_{d'})\prod_{d''\in\cD, d''\neq d'}\Pr({\rm Bernoulli}\left(\frac{1}{e^{\eps}+1}\right)=v_{d''})}\\
&= \frac{\Pr({\rm Bernoulli}\left(1/2\right)=v_d)\Pr({\rm Bernoulli}\left(\frac{1}{e^{\eps}+1}\right)=v_{d'})}{\Pr({\rm Bernoulli}\left(\frac{1}{e^{\eps}+1}\right)=v_{d})\Pr({\rm Bernoulli}\left(1/2\right)=v_{d'})}\\
&= \frac{\Pr({\rm Bernoulli}\left(\frac{1}{e^{\eps}+1}\right)=v_{d'})}{\Pr({\rm Bernoulli}\left(\frac{1}{e^{\eps}+1}\right)=v_{d})}\\
&\in [e^{-\eps}, e^{\eps}]
\end{align*}
\end{proof}

\subsection{Server-Side Algorithm}

The server-side algorithm $\AserverOHE$ simply aggregates the local reports, and de-bias the resulting noisy histogram.

\begin{algorithm}[htb]
	\caption{Server-Side $\AserverOHE$} \label{algserver:OHE}
	\begin{algorithmic}[1]
		\REQUIRE $\{\tilde{v}^{(1)}, \cdots, \tilde{v}^{(n)}\}$; $\epsilon$.
		\STATE Construct $\tilde{f}:\cD\to\mathbb{R}$ where for all $d\in\cD$
		\[\tilde{f}(d) = \frac{2(1+e^{\epsilon})}{e^{\epsilon}-1}\sum_{i=1}^n \tilde{v}^{(i)}_d-\frac{n}{e^{\epsilon}-1}\]
		\RETURN $\tilde{f}$
	\end{algorithmic}
\end{algorithm}

\begin{thm}[Utility guarantee]\label{OHEutilityguarantee}
Let $\epsilon>0$, $d\in\cD$, and $\tilde{f}(d)$ be the $d$-th coordinate of the output of $\AserverOHE(\tilde{v}^{(1)}, \cdots, \tilde{v}^{(n)}; \epsilon)$ where each $\tilde{v}^{(1)}$ is reported using $\AclientOHE$. If $f(d)$ is the true frequency of $d$ then,
\[\mathbb{E}[\tilde{f}(d)]=f(d)\]
\[\var{\tilde{f}(d)} = n\cdot \frac{4e^{\epsilon}}{(e^{\epsilon}-1)^2}+f(d).\]
\end{thm}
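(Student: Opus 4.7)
The plan is to reduce everything to per-client Bernoulli moments and then aggregate using independence, since the $n$ clients run $\AclientOHE$ on their data independently. Fix a target element $d\in\cD$ and split the clients into two groups according to whether $d^{(i)}=d$ (there are $f(d)$ such clients) or $d^{(i)}\neq d$ (there are $n-f(d)$ such clients). By construction of $\AclientOHE$, coordinate $\tilde v^{(i)}_d$ is $\Ber(1/2)$ in the first group and $\Ber(1/(e^\eps+1))$ in the second group. So the only facts I need are the mean and variance of these two Bernoullis.

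For the unbiasedness claim, I would first write
\[
\Ex{}{\sum_{i=1}^n \tilde v^{(i)}_d} \;=\; \frac{f(d)}{2} + \frac{n-f(d)}{e^\eps+1},
\]
then plug this into the linear de-biasing formula defining $\tilde f(d)$ in $\AserverOHE$. After clearing the common denominator $(e^\eps-1)$, the $f(d)$-coefficient collapses to $1$ and the constant term cancels, yielding $\E[\tilde f(d)]=f(d)$. This is purely a one-line algebraic simplification once the mean of the raw sum is in hand.

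For the variance, I would use that $\tilde v^{(i)}_d$ are independent across $i$ (each client's randomization is independent), so
\[
\var{\tilde f(d)} \;=\; \left(\frac{2(1+e^\eps)}{e^\eps-1}\right)^{\!2}\sum_{i=1}^n \var{\tilde v^{(i)}_d}.
\]
The per-client variances are $\tfrac14$ for the $f(d)$ clients in the first group and $\frac{e^\eps}{(e^\eps+1)^2}$ for the $n-f(d)$ in the second. Substituting gives a $\frac{(e^\eps+1)^2}{(e^\eps-1)^2}\,f(d) + \frac{4e^\eps}{(e^\eps-1)^2}\,(n-f(d))$ expression; the key simplification is the identity $(e^\eps+1)^2 - 4e^\eps = (e^\eps-1)^2$, which makes the $f(d)$-dependent part telescope to exactly $f(d)$, leaving the advertised $\frac{4e^\eps n}{(e^\eps-1)^2}+f(d)$.

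The only real care-point (and the closest thing to an obstacle) is the variance algebra: the identity $(e^\eps+1)^2 - 4e^\eps = (e^\eps-1)^2$ is what cleans up the data-dependent contribution into a bare $f(d)$, and it is easy to drop a factor of $2$ when expanding $\bigl(\tfrac{2(1+e^\eps)}{e^\eps-1}\bigr)^2$. Everything else is bookkeeping — linearity of expectation for the mean, independence of local randomizations for the variance, and the two Bernoulli moments as the only probabilistic input. I would not need any privacy-amplification or composition machinery here; the statement is entirely about moments of a fixed post-processing of independent biased coin flips.
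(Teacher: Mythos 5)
Your approach is exactly the paper's: reduce to per-client Bernoulli moments (mean $1/2$, variance $1/4$ when the client holds $d$; mean $1/(e^\eps+1)$, variance $e^\eps/(e^\eps+1)^2$ otherwise), then aggregate by linearity and independence, with the identity $(e^\eps+1)^2-4e^\eps=(e^\eps-1)^2$ collapsing the data-dependent variance term to a bare $f(d)$. The paper writes this with indicators $v^{(i)}_d$ rather than counting the two client groups, but the calculation is the same.

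One caution worth raising because you explicitly assert it: with the server formula \emph{as printed} in Algorithm~\ref{algserver:OHE}, the constant term does \emph{not} cancel. Carrying out your own expectation computation,
\[
\frac{2(1+e^\eps)}{e^\eps-1}\left(\frac{f(d)}{2}+\frac{n-f(d)}{e^\eps+1}\right)-\frac{n}{e^\eps-1}
= f(d)+\frac{n}{e^\eps-1},
\]
so unbiasedness requires the offset to be $-2n/(e^\eps-1)$ rather than $-n/(e^\eps-1)$. This is a typo in the paper's $\AserverOHE$ (the paper's own proof also glosses over this step), not a flaw in your method, and it has no effect on the variance since an additive constant is variance-free. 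But a reader actually executing your plan would hit this, so it is worth naming the fix rather than asserting that ``the constant term cancels.''
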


\begin{proof}[Proof of Theorem~\ref{OHEutilityguarantee}]
Since each coordinate is independent, we can deal with each coordinate separately. Now, given $d\in\cD$ and $i\in[n]$, let $v^{(i)}_d=1$ if the $i$th persons data point is $d$, and 0 otherwise. Then $\mathbb{E}[\tilde{v}^{(i)}_d]=1/2$ if $\tilde{v}^{(i)}_d=1$ and $1/(e^{\eps}+1)$ if $\tilde{v}^{(i)}_d=0$.
\begin{align*}
\mathbb{E}[\tilde{v}^{(i)}_d] &= \frac{1}{2}v^{(i)}_d+\frac{1}{e^{\eps}+1}(1-v^{(i)}_d)\\
&= \frac{1}{e^{\eps}+1}+\frac{e^{\eps}-1}{2(e^{\eps}+1)}v_d^{(i)}
\end{align*}
Since $\tilde{f}(d)=\sum_{i=1}^n v_d^{(i)}$, the unbiasedness follows from the linearity of expectation. Also, \begin{align*}
\var{\tilde{v}_d^{(i)}} &= \frac{1}{4}v_d^{(i)}+\frac{e^{\eps}}{(e^{\eps}+1)^2}(1-v_d^{(i)})\\
&= \frac{e^{\eps}}{(e^{\eps}+1)^2}+\frac{(e^{\eps}-1)^2}{4(e^{\eps}+1)^2}v_d^{(i)}
\end{align*}
Since each $\tilde{v}_d^{(i)}$ is independent, 
\begin{align*}
\var{\tilde{f}(d)} &= \frac{4(1+e^{\eps})^2}{(e^{\eps}-1)^2}\sum_{i=1}^n\left( \frac{e^{\eps}}{(e^{\eps}+1)^2}+\frac{(e^{\eps}-1)^2}{4(e^{\eps}+1)^2}v_d^{(i)}\right)\\
&= n\frac{4e^{\eps}}{(e^{\eps}-1)^2}+f(d)
\end{align*}
\end{proof}

\section{Device Enrollment - Detailed Example}\label{appendix:deviceenrollment}

As an example, suppose the analysts' goal is to learn the popular $n$-grams joined with user age. 
An example of a recipe used for learning a histogram of popular 3-grams, joined with age of the device owner is presented in Figure \ref{fig:recipe}. The fields in this recipe are as follows:

\begin{itemize}
\item recipe identifier: a unique identifier for this recipe. 
\item version: recipe version for compatibility. 
\item analysis identifier: provides the unique identifier which is used as the prefix for collection ids so that the server can identify payloads corresponding to this histogram query. This prefix can be part of a list hardcoded as part of the device OS as well, for further verification. Furthermore, data deemed as \emph{non-sensitive}, as well as experiment ids, could be appended to this string before submission to server.
\item query: this query is first verified and if it is an allowed query, it will be executed on the on-device data. If the query is not allowed, the system exits and no privatized record is shared with the secure aggregation protocol.
\item cohorts: if any data, relevant to the query, is determined to be non-sensitive, then it is specified here. This data could be appended to the collection id prefix provided in client identifier. For example, when learning n-grams of known words, the locale (device language like $``en\_US"$) may be deemed \emph{non-sensitive}. The device should keep a list of approved non-sensitive data fields for each analysis hardcoded as part of the device OS.

\item data content type: a scheme to encode the query output to a one hot vector. In this field, we determine the tuple of features we intend to learn, and a rule for bucketing each element of this tuple. In the example presented in Figure  \ref{fig:recipe}, we choose to build a histogram of  ({\bf age, ngrams}). 
\begin{itemize}
\item Age is bucketed based on the bucket boundaries provided in the recipe. In the example recipe in Figure \ref{fig:recipe}, the buckets for age are $\{\emph{OOV}, \emph{[20, 30)}, \emph{[30, 40]}, \cdots, \emph{[70, 80)}] $, where \emph{OOV} is added on the device for the elements which are not covered in the bucketing rule. The total number of age buckets is $(1 + 7)$.
\item  n-grams are bucketed based on the prefix tree, and the set of known words provided on the recipe. Every combination of prefix tree leaf and a known word listed in recipe is considered to be a bucket. In addition to OOV token, the device adds an \emph{end-token} to determine if a leaf in the given prefix tree is a leaf as well. For the recipe in Figure \ref{fig:recipe}, the buckets for 3-grams are $\{$\emph{OOV}, \emph{hello world end-token},\emph{hello world OOV}, \emph{hello world a}, $\cdots,$ \emph{i got world}$\}$. The total number of buckets are 
$(1 + 3 \times (1 + 1 + 9))$
\item eventually, every combination of  buckets above builds a one hot vector for ({\bf age, ngrams}). The total number of bins will be $(1 + 7) \times (1 + 3 \times (1 + 1 + 9))$
\end{itemize}

\end{itemize}

\begin{figure}[H]
	\begin{center}
			\includegraphics[scale=0.4]{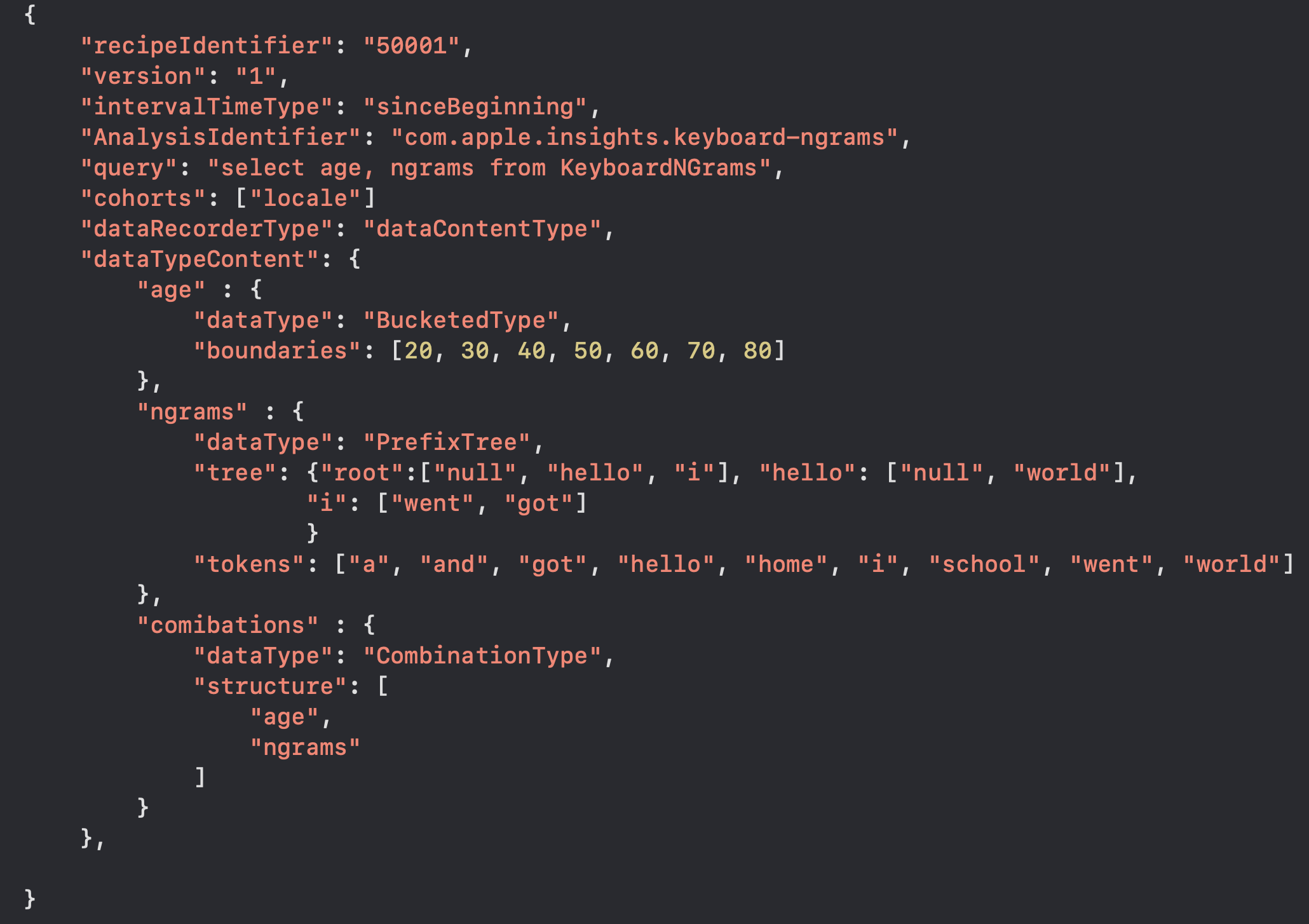}
	\caption{Sample recipe for learning join distribution of 3-grams and device owner age.}
		\label{fig:recipe}
	\end{center}
\end{figure}

\section{Deletion Differential Privacy}\label{sec:deletionDP}

In this section, we will outline how one could use deletion model of local differential privacy as the main local differential privacy, rather than the replacement model outlined in Defintions~\ref{LDP} and~\ref{localDP}. First, let us formally define the deletion model.

\begin{defn}[Local Randomizer \cite{DR14,Kasiviswanathan:2008,Erlingsson2020}]
Let $\A: \cD \to \cY$ be a randomized algorithm mapping a data entry in $\cD$ to $\cY$.  The algorithm $\A$ is an $\epsilon$-DP local randomizer in the \emph{deletion} model if there exists a reference distribution $\reference$ such that for all data entries $d \in \cD$ and all events $E\subset \cY$, we have 
\[
- \epsilon \leq \ln\left(\frac{\Pr[\A(d) \in E ]}{\Pr[\reference \in E ]}  \right)\leq \epsilon
\]
\end{defn}

The deletion model of differential privacy is closely related to the replacement model introduced in Definition~\ref{LDP}; if $\A$ is $\eps$-DP in the deletion model then $\A$ is $\eps'$-DP in the replacement model where $\eps'$ can take any value in $[\eps, 2\eps]$. Exactly where the replacement privacy guarantees lies in $[\eps, 2\eps]$ is algorithm dependent. Conversely, if $\A$ is $\eps$-DP in the replacement model then it is $\eps$-DP in the deletion model.

Differential privacy in the deletion model ensures that it is impossible to confidently determine whether an individual contributed data or simply reported a sample from the reference distribution (i.e. did not contribute data). This is slightly weaker than replacement DP, which guarantees that from the output of $\A$, it is impossible to confidently guess whether the user’s data point was $d$ or $d'$, for any pair $d$ and $d'$. If we think of the reference distribution as the "average" user behavior then, intuitively, deletion DP means it hard to distinguish the behavior of an outlier from the average behavior. Replacement DP provides the slightly stronger guarantee that it is hard to distinguish between any two outliers, even if the outliers are ``opposites". 
One can define the deletion model for local differential privacy in exactly the same manner as Definition~\ref{localDP} where the local randomizers must be $\epsilon$-DP in the deletion model.

\subsection{Symmetric Private One-Hot Encoding}\label{sec:symOHE}

In section, we will outline a locally differentially private randomizer that is appropriate for use when implementing the system we have outlined with deletion local differential privacy. \emph{Symmetric private one-hot encoding} is a slight variant on asymmetric private one-hot encoding, which was outlined in Section~\ref{sec:OHE}.

\subsubsection{Client-Side Algorithm}

The client encodes their bit as a vector of length $|\cD|$, which is 1 in the position of the client's data, and 0 elsewhere. They then independently send the value of each coordinate using a standard differentially private algorithm called randomized response.
Algorithm~\ref{algclient:OHE} gives details of the client-side algorithm. 

\begin{algorithm}[htb]
	\caption{Client-Side $\AclientsymOHE$} \label{algclient:OHE}
	\begin{algorithmic}[1]
		\REQUIRE Data element $d \in \cD$; $\epsilon$.
		\STATE Initialise $v, \tilde{v} \in \{0\}^{\cD}$
		\STATE $v_d = 1$
		\FOR{$d'\in\cD$}
			\STATE $\tilde{v_{d'}} = \begin{cases} v_{d'} & \text{with probability } \frac{e^{\epsilon}}{1+e^{\epsilon}}\\ 1-v_{d'} & \text{otherwise} \end{cases}$
		\ENDFOR
		\RETURN $\tilde{v}$
	\end{algorithmic}
\end{algorithm}

\begin{thm}[Privacy guarantee] \label{privacysymOHE}
For any $\epsilon>0$, $\AclientsymOHE$ is $\epsilon$-locally differentially private in the deletion model.
\end{thm}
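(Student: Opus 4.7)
The plan is to exhibit an explicit reference distribution $\reference$ and verify the deletion-model inequality point-wise over the finite output space $\{0,1\}^{\cD}$, then extend to arbitrary events by summation. The natural candidate is the distribution on $\{0,1\}^{\cD}$ whose coordinates are mutually independent, each distributed as $\text{Bernoulli}(1/(1+e^{\eps}))$: this is exactly what one would get by running the per-coordinate randomization as if every pre-randomization bit were zero. Under $\AclientsymOHE(d)$, the only coordinate whose pre-randomization value is $1$ is the $d$-th one; each of the remaining $|\cD|-1$ coordinates has marginal distribution $\text{Bernoulli}(1/(1+e^{\eps}))$, matching the reference exactly.

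With this choice, fix any $w\in\{0,1\}^{\cD}$ and factor $\Pr[\AclientsymOHE(d)=w]$ coordinate-by-coordinate using independence. The factors at coordinates $d'\neq d$ coincide with those of $\reference$ and cancel in the ratio, leaving only the factor at coordinate $d$. That remaining factor is precisely the privacy ratio of binary randomized response applied to the input bit $1$ versus the input bit $0$: for $w_d=1$ the ratio equals $(e^{\eps}/(1+e^{\eps}))/(1/(1+e^{\eps}))=e^{\eps}$, and for $w_d=0$ it equals $(1/(1+e^{\eps}))/(e^{\eps}/(1+e^{\eps}))=e^{-\eps}$. Both lie in $[e^{-\eps},e^{\eps}]$.

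To lift the point-wise bound to arbitrary events $E\subset\{0,1\}^{\cD}$, I would sum and use the uniform bound: $\Pr[\AclientsymOHE(d)\in E]=\sum_{w\in E}\Pr[\AclientsymOHE(d)=w]\le e^{\eps}\sum_{w\in E}\Pr[\reference=w]=e^{\eps}\Pr[\reference\in E]$, with the matching lower bound from the symmetric inequality. There is no serious obstacle; the only subtle point is the choice of $\reference$. Naive alternatives such as the uniform distribution (iid $\text{Bernoulli}(1/2)$) produce ratios whose single-coordinate factors are not all equal to one on $d'\neq d$, and thus do not cancel, yielding bounds that scale with $|\cD|$. The chosen $\reference$ agrees with $\AclientsymOHE(d)$ on every coordinate except the data-dependent one, which is exactly what reduces the multi-coordinate privacy analysis to the one-bit randomized response computation.
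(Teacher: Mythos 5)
Your argument is correct and coincides with the paper's proof: both pick the reference distribution with independent $\text{Bernoulli}(1/(1+e^{\epsilon}))$ coordinates, observe it matches $\AclientsymOHE(d)$ on every coordinate except $d$, and reduce the likelihood-ratio bound to the one-bit randomized-response computation at coordinate $d$. You merely write the ratio in the reciprocal orientation and spell out the summation step for general events, which the paper leaves implicit.
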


\begin{proof}
The reference distribution $\reference$ is obtained by independently for each coordinate outputting 0 with probability $\frac{e^{\epsilon}}{1+e^{\epsilon}}$ and 1 otherwise. Note that given datapoint $d\in\cD$, the reference distribution matches the distribution $\AclientsymOHE(d; \epsilon)$ on every coordinate except $d$. Therefore, given $\bbv\in\{0,1\}^{\cD}$,
\begin{align*}
\frac{\Pr(\reference=\bbv)}{\Pr(\AclientsymOHE(d)=\bbv)} &= \frac{\prod_{d'\in\cD}\Pr((\reference)_{d'}=\bbv_{d'})}{\prod_{d'\in\cD} \Pr((\AclientsymOHE(d))_{d'}=\bbv_{d'})}\\
&=\frac{\Pr((\reference)_{d}=\bbv_{d})}{\Pr((\AclientOHE(d))_{d}=\bbv_{d})}\\
&=\begin{cases} 
\frac{e^{\epsilon}}{1} & \text{if } \bbv_d=0\\
\frac{1}{e^{\epsilon}} & \text{if } \bbv_d=1
\end{cases}.
\end{align*}
\end{proof}

\noindent Note that $\AclientsymOHE$ is $2\epsilon$-locally differentially private in the replacement model.

\subsubsection{Server-Side Algorithm}

The server-side algorithm $\AserversymOHE$ simply aggregates the local reports, and de-bias the resulting noisy histogram.

\begin{algorithm}[htb]
	\caption{Server-Side $\AserversymOHE$} \label{algserver:OHE}
	\begin{algorithmic}[1]
		\REQUIRE $\{\tilde{v}^{(1)}, \cdots, \tilde{v}^{(n)}\}$; $\epsilon$.
		\STATE Construct $\tilde{f}:\cD\to\mathbb{R}$ where for all $d\in\cD$
		\[\tilde{f}(d) = \frac{1+e^{\epsilon}}{e^{\epsilon}-1}\sum_{i=1}^n \tilde{v}^{(i)}_d-\frac{n}{e^{\epsilon}-1}\]
		\RETURN $\tilde{f}$
	\end{algorithmic}
\end{algorithm}

\begin{thm}[Utility guarantee]\label{OHEutilityguarantee}
Let $\epsilon>0$, $d\in\cD$, and $\tilde{f}(d)$ be the $d$-th coordinate of the output of $\AserversymOHE(\tilde{v}^{(1)}, \cdots, \tilde{v}^{(n)}; \epsilon)$ where each $\tilde{v}^{(1)}$ is reported using $\AclientsymOHE$. If $f(d)$ is the true frequency of $d$ then,
\[\mathbb{E}[\tilde{f}(d)]=f(d)\]
\[\var{\tilde{f}(d)} = n\cdot \frac{e^{\epsilon}}{(e^{\epsilon}-1)^2}.\]
\end{thm}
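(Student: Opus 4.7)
The plan is to compute $\mathbb{E}[\tilde{v}^{(i)}_d]$ and $\var{\tilde{v}^{(i)}_d}$ for each client's randomized bit, then propagate these through the affine transformation defining $\tilde{f}(d)$, using linearity of expectation for the mean and independence across clients for the variance. Let $v^{(i)}_d \in \{0,1\}$ denote the $i$-th client's true indicator. Inspecting $\AclientsymOHE$, I would note that $\tilde{v}^{(i)}_d$ is Bernoulli with success probability $p_i = \tfrac{e^{\epsilon}}{1+e^{\epsilon}}$ when $v^{(i)}_d=1$ and $p_i = \tfrac{1}{1+e^{\epsilon}}$ when $v^{(i)}_d=0$. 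Both cases can be written uniformly as $p_i = \tfrac{1}{1+e^{\epsilon}} + \tfrac{e^{\epsilon}-1}{1+e^{\epsilon}}\, v^{(i)}_d$.

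For unbiasedness, I would sum these expectations to obtain $\sum_i \mathbb{E}[\tilde{v}^{(i)}_d] = \tfrac{n}{1+e^{\epsilon}} + \tfrac{e^{\epsilon}-1}{1+e^{\epsilon}}\, f(d)$, substitute into the definition of $\tilde{f}(d)$, and observe that the factor $\tfrac{1+e^{\epsilon}}{e^{\epsilon}-1}$ cancels the denominators so that the constant part becomes $\tfrac{n}{e^{\epsilon}-1}$, which is exactly cancelled by the debiasing term $-\tfrac{n}{e^{\epsilon}-1}$, leaving $f(d)$.

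The key observation for the variance, and the one thing that distinguishes the symmetric case from the asymmetric case in Theorem~\ref{OHEutilityguarantee} of Appendix~\ref{sec:OHE}, is that the two possible values of $p_i$, namely $\tfrac{1}{1+e^{\epsilon}}$ and $\tfrac{e^{\epsilon}}{1+e^{\epsilon}}$, are symmetric around $1/2$. Hence $p_i(1-p_i) = \tfrac{e^{\epsilon}}{(1+e^{\epsilon})^2}$ regardless of the value of $v^{(i)}_d$, so $\var{\tilde{v}^{(i)}_d}$ is a data-independent constant. By independence across clients, $\var{\sum_i \tilde{v}^{(i)}_d} = n\cdot \tfrac{e^{\epsilon}}{(1+e^{\epsilon})^2}$, and scaling by $\bigl(\tfrac{1+e^{\epsilon}}{e^{\epsilon}-1}\bigr)^2$ gives the claimed $n\cdot \tfrac{e^{\epsilon}}{(e^{\epsilon}-1)^2}$ with no residual $f(d)$ term.

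There is no real obstacle; the computations are elementary Bernoulli arithmetic. The only conceptual point worth flagging is the symmetry argument that eliminates the $v^{(i)}_d$-dependence from the per-coordinate variance --- this is precisely why the symmetric randomizer gives a slightly cleaner utility bound than the asymmetric version, at the cost of requiring the deletion-model interpretation to achieve the same $\epsilon$.
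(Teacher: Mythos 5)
Your proposal is correct and follows essentially the same route as the paper: compute the Bernoulli mean and variance per coordinate, sum using linearity of expectation and independence, and propagate through the affine debiasing map. The observation that the two success probabilities are symmetric about $1/2$ (making $p_i(1-p_i)$ data-independent, hence no residual $f(d)$ term as in the asymmetric case) is a nice conceptual gloss on the same computation the paper carries out directly.
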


\begin{proof}
For each $i\in[n]$, 
\[\mathbb{E}[\tilde{v}^{(i)}_d]=\frac{e^{\epsilon}-1}{1+e^{\epsilon}}v^{(i)}_d+\frac{1}{1+e^{\epsilon}} \;\;\;\text{and}\;\;\; \var{\tilde{v}^{(i)}_d)}= \frac{e^{\epsilon}}{(e^{\epsilon}+1)^2}.\] Therefore, 
\begin{align*}
\mathbb{E}[\tilde{f}(d)] &= \frac{1+e^{\epsilon}}{e^{\epsilon}-1}\sum_{i=1}^n \mathbb{E}[\tilde{v}^{(i)}_d]-\frac{n}{e^{\epsilon}-1}\\
&= \frac{1+e^{\epsilon}}{e^{\epsilon}-1}\sum_{i=1}^n \left(\frac{e^{\epsilon}-1}{1+e^{\epsilon}}v^{(i)}_d+\frac{1}{1+e^{\epsilon}}\right)-\frac{n}{e^{\epsilon}-1}\\
&= \sum_{i=1}^n v^{(i)}_d
\end{align*}
and 
\begin{align*}
\var{\tilde{f}(d)} &= \var{\frac{1+e^{\epsilon}}{e^{\epsilon}-1}\sum_{i=1}^n \tilde{v}^{(i)}_d}\\
&= \sum_{i=1}^n \left(\frac{1+e^{\epsilon}}{e^{\epsilon}-1}\right)^2 \var{ \tilde{v}^{(i)}_d}\\
&= \sum_{i=1}^n \left(\frac{1+e^{\epsilon}}{e^{\epsilon}-1}\right)^2 \frac{e^{\epsilon}}{(e^{\epsilon}+1)^2}\\
&=n\cdot \frac{e^{\epsilon}}{(e^{\epsilon}-1)^2}
\end{align*}
\end{proof}

\subsubsection{Privacy Amplification by Aggregation Bounds for Private One-Hot Encoding}

Theorem~\ref{amplificationbyaggregation} gave an upper bound on the privacy parameters for aggregating $n$ local reports, each sent using a replacement local DP randomizer. We also discussed that an upper bound could be computed numerically based on algorithms presented in~\cite{Feldman:2022}. When using local DP randomizers that are $\eps_0$-DP in the deletion model, one can use Theorem~\ref{amplificationbyaggregation} after using the conversion from deletion DP privacy parameters to replacement DP privacy parameters. However, this approach is suboptimal for many local randomizers, including symmetric private OHE. In this section, we will outline a numerical method for computing an upper bound on the cohort aggregate privacy guarantee when aggregating $n$ copies of symmetric private OHE. We'll be using the privacy amplification in terms of an alternate version of differential called R\'enyi differential privacy.

\begin{defn}[R\'enyi divergence]
For two random variables $P$ and $Q$, the R\'enyi divergence of $P$ and $Q$ of order $\alpha>1$ is \[D^{\alpha}(P\|Q) = \frac{1}{\alpha-1}\ln \E_{x\sim Q}\left[\left(\frac{P(x)}{Q(x)}\right)^{\alpha} \right] .\]
For $\alpha =1$, $D^{1}(P\|Q) = \mbox{KL}(P\|Q) = \E_{x\sim P}\left[\ln\left(\frac{P(x)}{Q(x)}\right)\right]$.
\end{defn}

Firstly, we'll need a method for computing the general amplification by aggregation bounds for binary randomized response. Binary randomized response is defined by \[\texttt{2RR}(d; \eps_0) = \begin{cases} d & \text{with probability} \frac{e^{\eps_0}}{e^{\eps_0}+1} \\ 1-d & \text{otherwise.} \end{cases}\] The algorithm $\AclientsymOHE$ simply performs binary randomized response individually on each coordinate. Define $\rho_{\texttt{2RR}}(n, \alpha, \eps_0)$ to be a function such that for any two databases $D, D'\in\{0,1\}^n$ that differ on the data of a single individual, \[D^{\alpha}\left(\sum_{i=1}^n \texttt{2RR}(d^{(i)}; \eps_0), \sum_{i=1}^n \texttt{2RR}(d^{'(i)}; \eps_0)\right)\le \rho_{\texttt{2RR}}(n,\alpha,\eps_0)\]
For any given $n, \alpha$ and $\eps_0$, the function $\rho_{\texttt{2RR}}(n, \alpha, \eps_0$ can be computed numerically. One can also use general upper bounds on the privacy amplification by aggregation in terms of R\'enyi differential privacy, presented in \cite{Feldman:2022}.

\begin{thm}\label{RAPPORrenyi}
Let $\eps_0>0, \eps>0$, $\delta\in[0,1]$, $n\in\mathbb{N}$, $\alpha>1$ and $\cD$ be a finite data universe. Then, for all pairs of datasets $D, D'\in\cD^n$ that differ on the data of a single individual, \[D^{\alpha}\left(\sum_{i=1}^n \AclientsymOHE(d^{(i)}; \eps_0), \sum_{i=1}^n \AclientsymOHE(d^{'(i)}; \eps_0)\right)\le 2\rho_{\texttt{2RR}}(n, \alpha, \eps_0).\]
\end{thm}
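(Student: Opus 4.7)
The plan is to reduce the bound to two independent applications of the binary-randomized-response amplification bound, one per coordinate that is affected by the change. The key structural observation is that $\AclientsymOHE$ runs $\texttt{2RR}$ independently on each of the $|\cD|$ coordinates of the one-hot vector. Consequently, the sum $S = \sum_{i=1}^n \AclientsymOHE(d^{(i)}; \eps_0)$ is a random vector whose coordinates are mutually independent, and the $c$-th coordinate is distributed as $\sum_{i=1}^n \texttt{2RR}(v^{(i)}_c; \eps_0)$ where $v^{(i)}_c = \mathbf{1}[d^{(i)} = c]$.

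First I would identify which coordinates are affected by a single-user change. If the two datasets $D, D'$ differ only in user $i^\star$, whose data changes from $d$ to $d'$, then the one-hot vectors $v^{(i^\star)}$ differ in exactly two positions: position $d$ (flipping from $1$ to $0$) and position $d'$ (flipping from $0$ to $1$). For every other coordinate $c \notin \{d, d'\}$, the per-coordinate data vector $(v^{(i)}_c)_{i=1}^n$ is identical between $D$ and $D'$, so the $c$-th coordinate of $S$ has the same distribution under both datasets.

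Next I would invoke additivity of R\'enyi divergence across independent coordinates: for product distributions $P = \prod_c P_c$ and $Q = \prod_c Q_c$, we have $D^\alpha(P \| Q) = \sum_c D^\alpha(P_c \| Q_c)$. Applied here, all but two terms vanish, and the remaining two terms correspond to the $d$-th and $d'$-th coordinates. Each of these two terms is the R\'enyi divergence between $\sum_{i=1}^n \texttt{2RR}$ outputs on a pair of binary datasets differing in a single user, which is bounded by $\rho_{\texttt{2RR}}(n, \alpha, \eps_0)$ by the very definition of that function. Summing the two contributions yields the claimed bound $2\rho_{\texttt{2RR}}(n, \alpha, \eps_0)$.

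The argument is almost entirely structural, so the main subtlety to get right is the bookkeeping: verifying that the user-level change induces a change in exactly two coordinate-wise binary datasets, and cleanly invoking tensorization of R\'enyi divergence over these independent coordinates. No detailed manipulation of $\texttt{2RR}$ probabilities is required beyond what is already encapsulated by $\rho_{\texttt{2RR}}$, which is the main reason this factor-of-two bound comes out so cheaply compared to doing a direct privacy analysis of the vector-valued mechanism.
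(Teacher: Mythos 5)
Your proof is correct and follows essentially the same route as the paper's: both observe that $\AclientsymOHE$ factorizes into independent per-coordinate $\texttt{2RR}$ instances, that a single-user change perturbs exactly two coordinates, and that R\'enyi divergence adds across the independent coordinates (you phrase this as tensorization over products, the paper as the R\'enyi composition theorem, but these coincide in the independent case). The bookkeeping and the final bound $2\rho_{\texttt{2RR}}(n,\alpha,\eps_0)$ match the paper's argument.
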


\begin{proof}
Assume without loss of generality that $D$ and $D'$ differ on the data of the first individual. We can view each coordinate as an independent differentially private local randomizer, and the output of $\AclientsymOHE$ as the composition of these local randomizers.
Notice that the distributions of $\sum_{i=1}^n \AclientsymOHE(d^{(i)}; \eps_0)$ and $\sum_{i=1}^n \AclientsymOHE(d^{'(i)}; \eps_0)$ only differ in the $d^{(1)}$th and $d^{'(1)}$th coordinate. Therefore, we can use the composition theorem of R\'enyi differential privacy \cite{Mironov2017RnyiDP}. 
\begin{align*}
D^{\alpha}&\left(\sum_{i=1}^n \AclientsymOHE(d^{(i)}; \eps_0), \sum_{i=1}^n \AclientsymOHE(d^{'(i)}; \eps_0)\right)\\
&\le D^{\alpha}\left(\left[\sum_{i=1}^n \AclientsymOHE(d^{(i)}; \eps_0)\right]_{d^{(1)}}, \left[\sum_{i=1}^n \AclientsymOHE(d^{'(i)}; \eps_0)\right]_{d^{(1)}}\right)\\
&\hspace{0.5in} +D^{\alpha}\left(\left[\sum_{i=1}^n \AclientsymOHE(d^{(i)}; \eps_0)\right]_{d^{'(1)}}, \left[\sum_{i=1}^n \AclientsymOHE(d^{'(i)}; \eps_0)\right]_{d^{'(1)}}\right)\\
&\le 2\eps_{\texttt{2RR}}(n, \alpha, \eps_0),
\end{align*}
where the final bound holds since restricted to each coordinate, $\AclientsymOHE(d^{'(i)}; \eps_0)$ is simply binary randomized response.
\end{proof}

There are standard formulas for converting bounds on the R\'enyi divergence to privacy guarantees in terms of $(\eps, \delta)$-DP~\cite{Mironov2017RnyiDP, Kamath:2020}.
Thus, we can obtain an upper bound on the privacy amplification by aggregation when each user uses $\AclientsymOHE$ to communicate their data, by converting Theorem~\ref{RAPPORrenyi} into a bound on the privacy loss in terms of $(\eps, \delta)$-DP.

\end{document}